\newcommand{\ud}{\,\textrm{d}}
\newcommand{\RR}{\mathbb{R}}
\newcommand{\bxi}{\boldsymbol{\xi}}
\newcommand{\bx}{\boldsymbol{x}}
\newcommand{\by}{\boldsymbol{y}}
\newcommand{\balpha}{\boldsymbol{\alpha}}
\newcommand{\bz}{\boldsymbol{z}}
\newcommand{\bv}{\boldsymbol{v}}
\newcommand{\bu}{\boldsymbol{u}}
\newcommand{\be}{\boldsymbol{e}}
\newcommand{\bzero}{\boldsymbol{0}}
\newcommand{\bzeta}{\boldsymbol{\zeta}}
\newcommand{\bnabla}{\boldsymbol{\nabla}}
\newcommand{\HH}[1]{\mathcal{H}_{\text{#1}}}
\newcommand{\spn}{\text{Span}}
\newtheorem{theorem}{Theorem}
\newtheorem{lemma}[theorem]{Lemma}
\newtheorem{corollary}[theorem]{Corollary}
\begin{document}

\title[Calogero-Moser Systems as a Diffusion-Scaling Transform of Dunkl Processes]{Calogero-Moser Systems as a Diffusion-Scaling Transform of Dunkl Processes on the Line}

\author{Sergio Andraus$^1$, Makoto Katori$^2$ and Seiji Miyashita$^1$}
\address{$^1$ Department of Physics, Graduate School of Science, University of Tokyo,
7-3-1 Hongo, Bunkyo-ku, Tokyo 113-0033}
\address{$^2$ Department of Physics, Graduate School of Science and Engineering, Chuo University, 1-13-27 Kasuga, Bunkyo-ku, Tokyo 112-8551}
\ead{andraus@spin.phys.s.u-tokyo.ac.jp}

\begin{abstract}

The Calogero-Moser systems are a series of interacting particle systems on one dimension that are both classically and quantum-mechanically integrable. Their integrability has been established through the use of Dunkl operators (a series of differential-difference operators that depend on the choice of an abstract set of vectors, or root system). At the same time, Dunkl operators are used to define a family of stochastic processes called Dunkl processes. We showed in a previous paper that when the coupling constant of interaction of the symmetric Dunkl process on the root system $A_{N-1}$ goes to infinity (the freezing regime), its final configuration is proportional to the roots of the Hermite polynomials. It is also known that the positions of the particles of the Calogero-Moser system with particle exchange become fixed at the roots of the Hermite polynomials in the freezing regime. Although both systems present a freezing behaviour that depends on the roots of the Hermite polynomials, the reason for this similarity has been an open problem until now. In the present work, we introduce a new type of similarity transformation called the diffusion-scaling transformation, in which a new space variable is given by a diffusion-scaling variable constructed using the original space and time variables. We prove that the abstract Calogero-Moser system on an arbitrary root system is a diffusion-scaling transform of the Dunkl process on the same root system. With this, we prove that the similar freezing behaviour of the two systems on $A_{N-1}$ stems from their similar mathematical structure.

\end{abstract}

\pacs{05.40.-a, 02.30.Ik, 02.50.Ga}

\submitto{\JPA}

\maketitle

\section{Introduction}\label{intro}

The Calogero-Moser (CM) systems are a family of exactly-solvable, interacting multiple particle one-dimensional systems. They have been of great interest in both physics and mathematics because they are one of the few interacting particle system families that can be solved exactly. The first such system was formulated by Calogero \cite{calogero71}, and it is composed of $N$ particles on a line confined by a harmonic potential, in which the particles repel each other through a potential proportional to the inverse of the squared distance between them. A similar system on the unit circle instead of the line was considered by Sutherland \cite{sutherland71A}. CM systems defined on the unit circle are commonly known as Calogero-Moser-Sutherland (CMS) systems. Later, Moser \cite{moser75} established the integrability of the CM and CMS systems. These systems are categorized by the root system associated to them. A root system is a finite set of vectors that is closed under reflections; in other words, when one of its elements is reflected by another element, the result is an element of the root system as well (see Section~\ref{Preliminaries}). $A_{N-1}$ is one such set of vectors. Olshanetsky and Perelomov proved that the systems defined in \cite{calogero71} and \cite{sutherland71A} are intimately related to the root system $A_{N-1}$, and that their analogues on all classical root systems are integrable \cite{olshanetskyperelomov76,olshanetskyperelomov83}. Polychronakos considered the inclusion of a particle exchange operator on the CM system first considered by Calogero \cite{polychronakos92}; this operator is also related to $A_{N-1}$, so the CM system with particle exchange interaction is an extension of the CM system defined on \cite{calogero71}. He also studied its behavior when the potential strength parameter tends to infinity. This operation is known as the ``freezing trick'' \cite{polychronakos93}. When the freezing trick is applied, the positions of the particles of the CM system with particle exchange interaction are fixed to the roots of the Hermite polynomials, so only the exchange interaction between particles remains as the non-constant part of the original system Hamiltonian. Because the particle exchange operator can be regarded as a spin interaction operator \cite{hikamiwadati93}, the resulting system is called the Polychronakos-Frahm spin chain. Its integrability was consequently proved by Frahm \cite{frahm93}. The spectrum of the CM system with exchange operators on other root systems has been calculated by Khastgir, Pocklington and Sasaki \cite{khastgir00}. 

The integrability of the CM systems has been established by various methods, one of which makes use of Dunkl operators \cite{bernard93,forrester10}. These are differential difference operators that depend on the choice of a root system and a set of parameters called ``multiplicites'', and they were defined for the study of multivariate orthogonal polynomials \cite{dunkl89,dunklxu}. When a similarity transformation is applied on the CM system Hamiltonian using its ground-state eigenfunction (see Section~\ref{similaritycm}), the result is an operator that has a very simple form when it is expressed in terms of the Dunkl operators \cite{bakerdunklforrester}. At the same time, Dunkl operators are used to define Dunkl processes as a generalization of Brownian motion in the following sense: the Kolmogorov backward equation (KBE) that generates Brownian motion is the heat equation; if all the spatial derivatives in it are replaced by Dunkl operators, one obtains the Dunkl heat equation. Then, Dunkl processes are defined as the stochastic processes generated by the Dunkl heat equation \cite{rosler98,roslervoit98}. Dunkl processes are discontinuous stochastic processes with a drift term in general, but their ``radial part", or the part that is invariant under the reflection operators of the root system, is continuous.

Demni \cite{demni08A} noted that the radial Dunkl process of type $A_{N-1}$ is equivalent to Dyson's Brownian motion \cite{dyson62B}. We used this equivalence in our previous work \cite{andrauskatorimiyashita12} to give a physical interpretation to the Dunkl process of type $A_{N-1}$. It represents a system of $N$ Brownian particles in one dimension that repel each other through a logarithmic potential and exchange positions spontaneously. Additionally, we found that the coupling constant for this interaction can be understood as the inverse temperature. More importantly, we used the mathematical tools of Dunkl processes to investigate the freezing properties of the radial Dunkl processes of type $A_{N-1}$, and the equivalent Dyson's Brownian motion. We found that the (suitably scaled) trajectories of the Brownian particles are given by the roots of the $N$-th Hermite polynomial multiplied by the square root of the process time when the temperature tends to zero. This is similar to the behaviour of the CM system with particle exchange interaction when the freezing trick is applied. The close relationship that exists between Dunkl operators and the CM systems indicates that the similarity in the freezing behaviour of the CM system and the radial Dunkl process of type $A_{N-1}$ is not accidental. The purpose of this work is to find a clear relationship between the CM systems and Dunkl processes that explains their similarity in the freezing regime.

Our main result (Theorem~\ref{correspondencer}) is that one can use the diffusion-scaling transformation on the Kolmogorov forward equation (KFE) of any Dunkl process to obtain the Schr{\"o}dinger equation of a CM system evolving on imaginary time, both on the same root system. The diffusion-scaling transformation is a similarity transformation in which a new space variable is defined by dividing the original space variable by the square root of the process time. Due to its importance in the derivation of our main result, we explain the diffusion-scaling transformation in more concrete terms as follows. Suppose that the frequency of the harmonic potential of the CM system is $\omega$. Then, we apply the variable substitution given by
\begin{equation}
(t,\bx)\to(\tau,\bzeta)=\Bigg(\frac{\ln t}{2\omega},\frac{\bx}{\sqrt{2\omega t}}\Bigg)\label{substitutionr}
\end{equation}
to the KFE of the Dunkl process. After that, we use the function $\rme^{-W(\tau,\bxi)}$, with
\begin{equation}
W(\tau,\bzeta)=\frac{1}{2}\omega\sum_{i=1}^N\zeta_i^2-\frac{1}{2}\ln w_k(\bzeta)+\omega N\tau\label{Wr},
\end{equation}
to perform a similarity transformation on the result. Here, $w_k(\bx)$ is a weight function that depends on the root system being considered, and its definition is given by \eref{weightk}. Then, the result is that by applying the diffusion-scaling transformation to a Dunkl process (with space and time variables $\bx$ and $t$, respectively) we obtain a CM system on the same root system (with variables $\bzeta$ and $\tau$).

In Section~\ref{Preliminaries}, we review the definition of the reflection operator, the concept of root system and the definition of the multiplicity function. We also recall the Hamiltonian of the CM system with exchange operators on an arbitrary root system $R$ and the particular case of the CM system on $A_{N-1}$, as well as the definition of the Dunkl processes and their KBEs and KFEs. In Section~\ref{similaritycm}, we review how a similarity transformation of the CM Hamiltonian using its ground-state eigenfunction yields a Dunkl Laplacian (a Laplacian operator with Dunkl operators instead of partial derivatives) plus a confinement term. In Section~\ref{ideakt}, we review the main idea that gives the correspondence between Brownian motion and the quantum harmonic oscillator in one dimension. This idea, applied to the Dunkl process on a root system $R$, is the basis for the diffusion-scaling transformation we use to derive our main result, Theorem~\ref{correspondencer} in Section~\ref{resultr}. That is, we prove that by using the diffusion-scaling transformation defined by the relations given in \eref{substitutionr} and \eref{Wr}, the KFE of a Dunkl process is transformed into the Schr{\"o}dinger equation of the CM system. We explain the similar freezing behaviour of the Dunkl process and the CM system on the root system $A_{N-1}$ by applying Theorem~\ref{correspondencer} in Section~\ref{resulta}. We conclude by discussing these results and by outlining possible topics for future study in Section~\ref{conclusions}.

\section{Preliminaries}\label{Preliminaries}

\subsection{Root Systems and the Multiplicity Function}

We begin by defining the reflection operator by the equation
\begin{equation}\label{reflection}
\sigma_{\balpha}\bx=\bx-2\frac{\bx\cdot \balpha}{\alpha^2}\balpha.
\end{equation}
This operator acts on the vector $\bx$ by reflecting it through the hyperplane defined by the vector $\balpha$ in $N$ dimensions ($\balpha,\bx\in\RR^N$). Note that we write vectors with bold symbols, while we denote their squared norm by $|\balpha|^2=\balpha\cdot\balpha=\alpha^2$.

A root system is defined as a finite set of vectors such that, when its elements, called roots, are reflected by any root, the resulting vector also belongs to the set. In other words, a root system $R$ is defined by the property that $\sigma_{\balpha}\bxi\in R$ for any $\balpha,\bxi\in R$. In addition, a root system is called reduced if, for all $\balpha\in R$, the statement $b\balpha\in R$ with $b\in\RR$ implies that $b=\pm 1$. We will assume that all the root systems considered here are reduced. We also assume that no root is the zero vector, because a reflection along the zero vector is undetermined. 

For every root system, there exists a set of base vectors with which every root is given by a linear combination of the base vectors with all the coefficients being either positive or negative. The roots in this base are called simple roots. This base is not unique, but it always divides the root system into two subsets: the positive subsystem $R_+$ (roots generated by positive coefficients) and the negative subsystem $R_-$ (negative coefficients), and both subsystems have the same number of elements. 

An example of this is given by the root system $B_2=\{\pm \be_1, \pm \be_2, \pm(\be_1-\be_2), \pm(\be_1+\be_2)\}$ in $\RR^2$, where $\be_i$ denotes the $i$-th canonical base vector, $i=1,2$. If we choose $\be_1$ and $\be_2-\be_1$ as the simple roots, the positive subsystem becomes $\{\be_1,\be_2,\be_2+\be_1,\be_2-\be_1\}$. The negative subsystem is formed by the same roots multiplied by $-1$. In general, $R_-=\{-\balpha:\balpha\in R_+\}$ (this is sometimes written as $R_-=-R_+$.)

The multiplicity function, $k(\balpha)$, $\balpha\in R$, is a series of parameters, called multiplicities, assigned to each disjoint part of a root system in the following sense: if there exist roots $\balpha, \bxi, \bzeta$ such that $\sigma_{\balpha}\bxi=\bzeta$, then $k(\bxi)=k(\bzeta)$. Consequently, the multiplicities assigned to two different roots is different only if they cannot be related by a series of reflections. 

In the case $R=B_2$, there is no root that reflects the roots $\pm \be_1$ and $\pm \be_2$ into the roots $\pm(\be_1-\be_2)$ and $\pm(\be_1+\be_2)$. Therefore, the multiplicity function for $B_2$ can take only two different values: $k(\pm \be_1)=k(\pm \be_2)=k_1$ and $k(\pm(\be_1-\be_2))=k(\pm(\be_1+\be_2))=k_2$. That is, the root system $B_2$ has two multiplicities. In general, we will assume that all multiplicities are non-negative real numbers.

It is useful to define the sum of multiplicities over the positive subsystem as
\begin{equation}
\gamma=\sum_{\balpha\in R_+}k(\balpha)=\frac{1}{2}\sum_{\balpha\in R}k(\balpha).
\end{equation}
Note that this value does not depend on the choice of simple subsystem. For example, for $B_2$, $\gamma=2(k_1+k_2)$ for all possible positive subsystems.

We will focus on the root system $A_{N-1}$ in some of the later sections. This root system is given by $A_{N-1}=\{(\be_i-\be_j)/\sqrt{2}:1\leq i\neq j\leq N\}$. Choosing the simple roots
\begin{equation}
(\be_{i+1}-\be_i)/\sqrt{2},\ i=1,\ldots,N-1\label{simpleroots}
\end{equation}
produces the positive subsystem
\begin{equation}
A_{N-1,+}=\{\balpha_{ij}=(\be_i-\be_j)/\sqrt{2}:1\leq j<i\leq N\}.\label{positivesubsystem}
\end{equation}
Additionally, this root system has only one multiplicity, $k$, and we have
\begin{equation}
\gamma=\frac{N}{2}(N-1)k.
\end{equation}
Most importantly, the effect of the reflections along the roots of $A_{N-1}$ is that of exchanging the reflected vector's components. Using the notation $\sigma_{ij}=\sigma_{\balpha_{ij}}$, the expression $\sigma_{ij}\bx$ denotes the vector obtained by exchanging the $i$-th and $j$-th components of the vector $\bx$. For details on the above, see Appendix~A on \cite{andrauskatorimiyashita12}.

Finally, we define the following weight function:
\begin{equation}
w_k(\bx)=\prod_{\balpha\in R}|\balpha\cdot\bx|^{k(\balpha)}.\label{weightk}
\end{equation}
This function is invariant under reflections along the roots of $R$, and it is harmonic, \emph{i.e.}, 
\begin{equation}
\Delta^{(x)}w_k(\bx)=\sum_{i=1}^{N}\frac{\partial^2}{\partial x_i^2}w_k(\bx)=0.
\end{equation}
This function, $w_k(\bx)$, will be used in the transformation considered in Theorem~\ref{correspondencer} in Section~\ref{resultr}.

\subsection{Calogero-Moser systems}\label{PreliminariesCalogeroMoserSystems}

Under an arbitrary root system $R$, the CM systems on a line with a harmonic background potential and an inverse-square repulsion potential are given by the Hamiltonian (see, \emph{e.g.}, \cite{khastgir00})
\begin{equation}
\mathcal{H}_{\text{CM}}^R=-\frac{1}{2}\Delta^{(x)}+\sum_{\balpha\in R_+}\frac{\alpha^2}{2}\frac{k(\balpha)[k(\balpha)-\sigma_\alpha]}{(\balpha\cdot\bx)^2}+\frac{\omega^2}{2}\sum_{i=1}^Nx_i^2,\label{generalcm}
\end{equation}
where all particles have been chosen to be of unit mass, and we have taken $\hbar =1$. 
In particular, when the root system is $A_{N-1}$ and the harmonic frequency $\omega$ is set equal to the multiplicity $k$, the CM system with  particle exchange interaction considered in \cite{polychronakos93} is recovered:
\begin{equation}
\mathcal{H}_{\text{CM}}^{A_{N-1}}=-\frac{1}{2}\Delta^{(x)}+\sum_{1\leq i<j\leq N}\frac{k(k-\sigma_{ij})}{(x_i-x_j)^2}+\frac{k^2}{2}\sum_{i=1}^{N}x_i^2\label{xcm}.
\end{equation}

\subsection{Dunkl processes}

We define the gradient operator as $\bnabla^{(x)}=(\frac{\partial}{\partial x_1},\ldots,\frac{\partial}{\partial x_N})$. Under an arbitrary root system $R$, the Dunkl operator in the direction $\bxi$ is given by \cite{dunkl89}
\begin{equation}
T_{\bxi} f(\bx)=\bxi\cdot\bnabla^{(x)} f(\bx)+\sum_{\balpha \in R_+}k(\balpha)\frac{f(\bx)-f(\sigma_{\balpha} \bx)}{\balpha\cdot\bx} \balpha\cdot\bxi\label{DunklDefinition},
\end{equation}
that is, it consists of a combination of a directional derivative and a sum of difference terms taken along the reflections given by the roots in the positive subsystem. If we choose $\bxi=\be_i$, we use the notation $T_{\be_i}=T_i$, and we write
\begin{equation}
T_i f(\bx)=\frac{\partial}{\partial x_i} f(\bx)+\sum_{\balpha \in R_+}k(\balpha)\frac{f(\bx)-f(\sigma_{\balpha} \bx)}{\balpha\cdot\bx} \alpha_i.
\end{equation}
The Dunkl Laplacian is given by \cite{dunklxu}
\begin{eqnarray}
\fl\sum_{i=1}^NT_i^2 f(\bx)&=&\Delta^{(x)} f(\bx)+2\sum_{\balpha\in R_+}k(\balpha)\frac{\balpha\cdot\bnabla^{(x)} f(\bx)}{\balpha\cdot\bx}-\sum_{\balpha\in R_+}k(\balpha)\alpha^2\frac{f(\bx)-f(\sigma_{\balpha} \bx)}{(\balpha\cdot\bx)^2},
\end{eqnarray}
and with it, the Dunkl heat equation is defined by \cite{rosler98}
\begin{equation}
\frac{\partial}{\partial t}u(t,\bx)=\frac{1}{2}\sum_{i=1}^NT_i^2 u(t,\bx),
\end{equation}
where $u(t,\bx)$ is a sufficiently well-behaved function. Dunkl processes are defined as the stochastic processes whose KBE is the Dunkl heat equation. That is, their transition probability density (TPD) $p_k(t,\bx|\bx^\prime)$ obeys the equation \cite{dunklxu,roslervoit98}
\begin{eqnarray}
\frac{\partial}{\partial t}p_k(t,\bx|\bx^\prime)=&&\frac{1}{2}\Delta^{(x^\prime)} p_k(t,\bx|\bx^\prime)+\sum_{\balpha\in R_+}k(\balpha)\frac{\balpha\cdot\bnabla^{(x^\prime)} p_k(t,\bx|\bx^\prime)}{\balpha\cdot\bx^\prime}\nonumber\\
&&-\sum_{\balpha\in R_+}k(\balpha)\frac{\alpha^2}{2}\frac{p_k(t,\bx|\bx^\prime)-p_k(t,\bx|\sigma_{\balpha}\bx^\prime)}{(\balpha\cdot\bx^\prime)^2},\label{dunklheat}
\end{eqnarray}
where we denote the initial condition using primed variables. Using the explicit form of $p_k(t,\bx|\bx^\prime)$ (see, \emph{e.g.}, \cite{gallardoyor08}) and the KBE above, it can be shown directly that the corresponding KFE is given by 
\begin{eqnarray}
\frac{\partial}{\partial t}p_k(t,\bx|\bx^\prime)=&&\frac{1}{2}\Delta^{(x)} p_k(t,\bx|\bx^\prime)-\sum_{\balpha\in R_+}k(\balpha)\frac{\balpha\cdot\bnabla^{(x)} p_k(t,\bx|\bx^\prime)}{\balpha\cdot\bx}\nonumber\\
&&+\sum_{\balpha\in R_+}k(\balpha)\frac{\alpha^2}{2}\frac{p_k(t,\bx|\bx^\prime)+p_k(t,\sigma_{\balpha}\bx|\bx^\prime)}{(\balpha\cdot\bx)^2}.\label{dunklforward}
\end{eqnarray}
From the two Kolmogorov equations, one can read off the terms on the rhs, from left to right, as a diffusion term, a drift term and a spontaneous jump term, respectively. The drift and jump terms are clearly dependent on the root system. In particular, the Dunkl process of type $A_{N-1}$ can be understood as the Brownian motion in one dimension of $N$ particles which repel each other through a logarithmic potential and exchange positions spontaneously \cite{andrauskatorimiyashita12}. This interpretation is reached by writing down the corresponding KBE \cite{dunklxu},
\begin{eqnarray}
\frac{\partial}{\partial t}p_k(t,\bx|\bx^\prime)&=&\frac{1}{2}\Delta^{(x^\prime)}p_k(t,\bx|\bx^\prime)+\sum_{i=1}^N\sum_{\substack{j=1:\cr j\neq i}}^N\frac{k}{x_i^\prime-x_j^\prime}\frac{\partial}{\partial x_i^\prime}p_k(t,\bx|\bx^\prime)\nonumber\\
&&-\frac{k}{2}\sum_{i=1}^N\sum_{\substack{j=1:\cr j\neq i}}^N\frac{p_k(t,\bx|\bx^\prime)-p_k(t,\bx|\sigma_{ij}\bx^\prime)}{(x_j^\prime-x_i^\prime)^2},\label{typeadunklkbe}
\end{eqnarray}
and noting that this equation is similar to the KBE of Dyson's Brownian motion \cite{katoritanemura07}.

Finally, we describe radial Dunkl processes. These are the stochastic processes that result from symmetrizing the initial condition of $p_k(t,\bx|\bx^\prime)$ under the reflections generated by the root system. In this case, the jump term on the KBE vanishes, and the corresponding Dunkl process becomes a continuous stochastic processes. As noted in \cite{demni08A}, the radial Dunkl process of type $A_{N-1}$ corresponds to Dyson's Brownian motion, with $\beta=2k$. Here, $\beta$ is Dyson's beta parameter, and it is equal to 1,2 and 4 for the Gaussian orthogonal, unitary and symplectic random matrix ensembles, respectively \cite{dyson62B,dyson62}.

\section{Similarity Transformation of the Calogero-Moser Systems}\label{similaritycm}

Dunkl operators have been used as a tool to prove the integrability of the CM systems \cite{forrester10}. More specifically, it has been shown under several root systems \cite{rosler98} that after applying a similarity transformation (using the ground state eigenfunction), the CM system Hamiltonian is expressed as a Dunkl Laplacian minus a term of the form $\bx\cdot\bnabla^{(x)}$. For $R=A_{N-1}$, given the function $W(\bx)=\sum_{i=1}^Nx_i^2/2-\sum_{i<j}\ln|x_i-x_j|$, the operator equation
\begin{equation}
-\rme^{kW}(\mathcal{H}_{\textrm{CM}}^{A_{N-1}}-E_{\text{CM}}^{A_{N-1}})\rme^{-kW}=\frac{1}{2}\sum_{i=1}^N T_i^2 -k\sum_{j=1}^N x_j\frac{\partial}{\partial x_j}\label{transformedhamiltonian}
\end{equation}
is obtained. Here, the ground-state energy is given by $E_{\text{CM}}^{A_{N-1}}=[kN+k^2N(N-1)]/2$. One can then proceed to find polynomial eigenfunctions for the rhs of this expression as stated in \cite{bakerdunklforrester} and shown in \cite{bakerforrester97}.

The equation above suggests that there must be a way to transform the CM Hamiltonian of type $A_{N-1}$ into an expression proportional to the Dunkl Laplacian of type $A_{N-1}$. However, we would like to obtain a relation like \eref{transformedhamiltonian} without the second term on its rhs. 

\section{Brownian Motion and the Quantum Harmonic Oscillator}\label{ideakt}

A similar relationship appears between one-dimensional Brownian motion and the quantum harmonic oscillator; consider the Hamiltonian for the harmonic oscillator with $m=\omega=\hbar=1$, $\HH{H}=-\frac{1}{2}\frac{\partial^2}{\partial x^2}+\frac{1}{2}x^2$. A quick calculation yields
\begin{equation}
-\e^{x^2/2}(\HH{H}-1/2)\e^{-x^2/2}=\frac{1}{2}\frac{\partial^2}{\partial x^2}-x\frac{\partial}{\partial x}.\label{brownianuhlenbeck}
\end{equation}
In other words, the harmonic oscillator can be transformed into a Laplacian minus the term $x\frac{\partial}{\partial x}$. The stochastic process generated by the rhs of \eref{brownianuhlenbeck} has the KBE 
\begin{equation}
\frac{\partial}{\partial t}u(t,x)=\frac{1}{2}\frac{\partial^2}{\partial x^2}u(t,x)-x\frac{\partial}{\partial x}u(t,x),\label{ornsteinuhlenbeckkbe}
\end{equation}
and it is called an Ornstein-Uhlenbeck process \cite{mahnke09}; it represents a Brownian motion confined by a quadratic potential. Therefore, we can understand the rhs of \eref{transformedhamiltonian} as the generator of a Dunkl process in a harmonic potential. 

There also exists a way to transform the KBE of a Brownian motion (the heat equation, which is also its KFE) into a form that resembles the Schr{\"o}dinger equation of the harmonic oscillator \cite{katoritanemura07}.  Consider a Brownian motion whose probability density is given by the function $u(t,x)$ with initial condition $u(0,x)=\delta(x-x^\prime)$, $t>0$. Then, its KFE is
\begin{equation}
\frac{\partial}{\partial t}u(t,x)=\frac{1}{2}\frac{\partial^2}{\partial x^2}u(t,x).\label{brownian}
\end{equation}
Applying the variable substitution
\begin{equation}
(\tau,\zeta)=\Bigg(\ln t, \frac{x}{\sqrt{2t}}\Bigg)\label{substitutionbrownian}
\end{equation}
and the similarity transformation $u(\tau,\zeta)=\exp[-(\tau+\zeta^2)/2]U(\tau,\zeta)$, \eref{brownian} becomes
\begin{equation}
\frac{\partial}{\partial \tau}U(\tau,\zeta)=-\frac{1}{2}\left(\mathcal{H}_{\text{H}}-\frac{1}{2}\right)U(\tau,\zeta),
\end{equation}
with $x$ replaced by $\zeta$ in $\HH{H}$. This last equation shows that, under the transformation considered, Brownian motion corresponds to the harmonic oscillator evolving in imaginary time. 

Intuitively speaking, this transformation succeeds in relating Brownian motion and the harmonic oscillator because of the variable transformation \eref{substitutionbrownian}. Because Brownian motion is a free diffusion process, it is unbounded in space, whereas the harmonic oscillator is not. Dimensional analysis on the heat equation indicates that $x$ scales as $\sqrt{t}$, so we expect this diffusion process to be bounded when such a scaling is applied on $x$. This scaling yields a process similar to the Ornstein-Uhlenbeck process described by \eref{ornsteinuhlenbeckkbe}. After that, applying a suitable similarity transformation produces the harmonic oscillator Hamiltonian.

\section{Dunkl Processes and CM Systems on an Arbitrary Root System}\label{resultr}

The ideas of the previous section are applicable to Dunkl processes as follows: consider now a straightforward generalization of the substitution \eref{substitutionbrownian} given by \eref{substitutionr}, that is, consider a new space variable defined by diffusion-scaling the original space variable. Because the scaling is isotropic, it should have the same effect of \eref{substitutionbrownian} when applied on the Dunkl process on $R$. Hence, the diffusion-scaling transformation should relate Dunkl processes and the CM systems in the same way that Brownian motion and the quantum harmonic oscillator are related, as seen in the previous section. That is, we expect to be able to transform the Dunkl process on $R$ into the CM system on the same root system using the diffusion-scaling transformation, that is, by applying the similarity transformation 
\begin{equation}
u(\tau,\bzeta)=\exp[-W(\tau,\bzeta)]U(\tau,\bzeta)\label{transformationr}
\end{equation}
with $W(\tau,\bzeta)$ given by \eref{Wr} after the substitution \eref{substitutionr}. However, we must be careful with the type of Kolmogorov equation which we choose to transform. Brownian motion has the same equation for its KBE and KFE, but \eref{brownian} represents its forward evolution, so we should transform the KFE \eref{dunklforward}. We prove that the diffusion-scaling transformation indeed transforms Dunkl processes into CM systems in the following theorem.

\begin{theorem}\label{correspondencer}
The diffusion-scaling transformation given by \eref{substitutionr}, \eref{Wr} and \eref{transformationr} transforms the Dunkl process on the root system $R$ into the Calogero-Moser system with harmonic confinement on the same root system evolving in imaginary time.
\end{theorem}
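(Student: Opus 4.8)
The plan is to carry out exactly the two-step procedure advertised in the introduction — first the diffusion-scaling change of variables \eref{substitutionr}, then the similarity transformation \eref{transformationr}–\eref{Wr} — applied to the Dunkl KFE \eref{dunklforward}, and to verify term by term that the result is $\partial_\tau U = -\tfrac{1}{2}(\mathcal{H}_{\text{CM}}^R - E^R)U$ for an appropriate constant $E^R$. Everything is a (somewhat lengthy but mechanical) calculation, so the proof is really an organized bookkeeping exercise; the work is in keeping track of how each of the three pieces of \eref{dunklforward} (diffusion, drift, jump) transforms.

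First I would record the transformation rules for the differential operators under \eref{substitutionr}. Writing $p_k(t,\bx|\bx^\prime) = u(t,\bx)$ and using $\tau = (\ln t)/(2\omega)$, $\bzeta = \bx/\sqrt{2\omega t}$, the chain rule gives $\partial_t = \frac{1}{2\omega t}\partial_\tau - \frac{1}{2t}\bzeta\cdot\bnabla^{(\zeta)}$ and $\partial_{x_i} = \frac{1}{\sqrt{2\omega t}}\partial_{\zeta_i}$, hence $\Delta^{(x)} = \frac{1}{2\omega t}\Delta^{(\zeta)}$ and $\balpha\cdot\bnabla^{(x)} = \frac{1}{\sqrt{2\omega t}}\balpha\cdot\bnabla^{(\zeta)}$. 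Crucially, the combinations $\balpha\cdot\bx = \sqrt{2\omega t}\,\balpha\cdot\bzeta$ and $(\balpha\cdot\bx)^2 = 2\omega t\,(\balpha\cdot\bzeta)^2$ carry exactly the powers of $t$ needed so that the drift term $\sum k(\balpha)\frac{\balpha\cdot\bnabla^{(x)} u}{\balpha\cdot\bx}$ and the jump term $\sum k(\balpha)\frac{\alpha^2}{2}\frac{u + u\circ\sigma_\alpha}{(\balpha\cdot\bx)^2}$ both pick up a uniform factor $\frac{1}{2\omega t}$; note also that the reflection $\sigma_\alpha$ commutes with the isotropic scaling, so $p_k(t,\sigma_\alpha\bx|\bx^\prime)$ simply becomes $u$ evaluated at $\sigma_\alpha\bzeta$. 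Thus after the substitution the whole right-hand side of \eref{dunklforward} is $\frac{1}{2\omega t}$ times the ``reflected'' Dunkl Laplacian acting in $\bzeta$, while the left-hand side contributes the extra $-\frac{1}{2t}\bzeta\cdot\bnabla^{(\zeta)}$; clearing the common $\frac{1}{2t}$ yields $\partial_\tau u = \frac{1}{2}\big[\tfrac{1}{2\omega}(\text{reflected Dunkl Laplacian}) \cdot 2\omega + \ldots\big]$ — more precisely an equation of Ornstein–Uhlenbeck type, a ``reflected Dunkl Laplacian minus $\bzeta\cdot\bnabla^{(\zeta)}$'' acting on $u$, which is the analogue of \eref{ornsteinuhlenbeckkbe} and mirrors the structure of the right-hand side of \eref{transformedhamiltonian}.

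Next I would substitute $u(\tau,\bzeta) = \rme^{-W(\tau,\bzeta)}U(\tau,\bzeta)$ with $W$ given by \eref{Wr}. The term $\omega N\tau$ in $W$ is purely a function of $\tau$ and produces the constant energy shift $-\omega N$ (the analogue of the $-\tfrac12$ in the Brownian case, now reflecting $E^R$); the $\bzeta$-dependent part $\tfrac12\omega\sum\zeta_i^2 - \tfrac12\ln w_k(\bzeta)$ is exactly the logarithm of the CM ground state up to normalization, and conjugating the second-order operator by it is the reverse of the computation behind \eref{transformedhamiltonian}. Here I would use that $w_k$ is reflection-invariant and harmonic, $\Delta^{(\zeta)} w_k = 0$: the harmonicity kills the term that would otherwise obstruct turning $-\tfrac12\Delta^{(\zeta)}$ plus drift into the CM inverse-square potential $\sum_{\balpha\in R_+}\frac{\alpha^2}{2}\frac{k(\balpha)[k(\balpha)-\sigma_\alpha]}{(\balpha\cdot\bzeta)^2}$, and reflection-invariance is what pairs the $+u\circ\sigma_\alpha$ piece of the jump term with the $\sigma_\alpha$ inside $k(\balpha)-\sigma_\alpha$. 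Collecting the $\tfrac12\omega^2\sum\zeta_i^2$ from the Gaussian factor and the first-order cross terms, the right-hand side assembles into $-\tfrac12(\mathcal{H}_{\text{CM}}^R - E^R)$ with $E^R = \omega N + 2\omega\gamma$ (the general-root-system ground-state energy, specializing to $[kN + k^2N(N-1)]/2$-type expressions on $A_{N-1}$ after the $\omega=k$ identification). Identifying $\tau$ as imaginary time then gives the stated Schrödinger equation.

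The main obstacle I anticipate is the bookkeeping in the last step: showing that the difference (jump) terms conjugate correctly, since $\rme^{-W}$ is reflection-\emph{invariant} the exponential factors cancel cleanly in $\frac{\rme^{-W(\bzeta)}U(\bzeta) + \rme^{-W(\sigma_\alpha\bzeta)}U(\sigma_\alpha\bzeta)}{(\balpha\cdot\bzeta)^2}$, but one must carefully combine the $f(\bx)-f(\sigma_\alpha\bx)$ pieces hidden in the Dunkl Laplacian with the separate jump term of the KFE and with the $+$ sign in \eref{dunklforward} to reconstruct the operator $k(\balpha)-\sigma_\alpha$ in \eref{generalcm} rather than, say, $k(\balpha)+\sigma_\alpha$ or $k(\balpha)(k(\balpha)-1)$. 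A secondary subtlety is confirming that the cross term $\bnabla^{(\zeta)}\ln w_k \cdot \bnabla^{(\zeta)}$ generated by the conjugation exactly cancels the drift term $-\sum k(\balpha)\frac{\balpha\cdot\bnabla^{(\zeta)}}{\balpha\cdot\bzeta}$ produced in the first step — this follows from $\bnabla^{(\zeta)}\ln w_k = \sum_{\balpha\in R}k(\balpha)\frac{\balpha}{\balpha\cdot\bzeta} = 2\sum_{\balpha\in R_+}k(\balpha)\frac{\balpha}{\balpha\cdot\bzeta}$, but the factors of $2$ and $\omega$ need care. Once these cancellations are checked, the remaining terms are just the harmonic potential and the constant, and the theorem follows.
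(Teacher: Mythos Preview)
Your overall strategy --- the change of variables \eref{substitutionr} followed by conjugation by $\rme^{-W}$ --- is exactly the paper's, and your handling of the first-order (drift) cancellations is correct. The gap is in the zero-order terms. After conjugation, the combination $|\bnabla W|^2-\Delta W$ produces, among other things, the double sum
\[
\sum_{\balpha\in R_+}\sum_{\bxi\in R_+}\frac{k(\balpha)k(\bxi)\,\balpha\cdot\bxi}{(\balpha\cdot\bzeta)(\bxi\cdot\bzeta)},
\]
and to recover the CM potential one needs precisely that the off-diagonal ($\balpha\neq\bxi$) contributions cancel, leaving $\sum_{\balpha\in R_+}k(\balpha)^2\alpha^2/(\balpha\cdot\bzeta)^2$. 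This is the step the paper singles out in its Remark as ``perhaps the most important part of the proof, and it is not trivial,'' and proves in the Appendix (Lemma~\ref{lemmadoublesum}) via a genuine root-system argument: one groups the off-diagonal terms by the rotation $\sigma_{\balpha}\sigma_{\bxi}$, shows each group is reflection-invariant, and then uses that the discriminant is the minimal-degree alternating polynomial (Lemma~\ref{lemmadiscriminant}) to force each group to vanish.

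You plan instead to invoke $\Delta^{(\zeta)} w_k=0$. But that claim is false for generic multiplicities --- already for $A_1$ one has $w_k(\bx)=|x_1-x_2|^{2k}$ and $\Delta w_k=4k(2k-1)|x_1-x_2|^{2k-2}\neq 0$ --- and the paper's proof never uses it; the sentence in the preliminaries appears to be an oversight. Even if harmonicity held, it would yield $|\bnabla\ln w_k|^2=-\Delta\ln w_k=2\sum_{\balpha\in R_+} k(\balpha)\alpha^2/(\balpha\cdot\bzeta)^2$, producing a coefficient $k(\balpha)$ where $k(\balpha)^2$ is required to assemble $k(\balpha)[k(\balpha)-\sigma_\alpha]$ in \eref{generalcm}; the potential would come out wrong. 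So the real obstacle is not the reflection-term bookkeeping you flag (that part \emph{is} handled cleanly by the reflection-invariance of $W$, as you say) but the double-sum identity, and it needs the Appendix argument rather than harmonicity. A secondary point: your final equation carries a spurious factor $\tfrac12$ and the constant $E^R=\omega N+2\omega\gamma$; with the paper's normalization one obtains $-\partial_\tau U=(\mathcal{H}_{\mathrm{CM}}^R-E_0^R)U$ with $E_0^R=\omega(\gamma+N/2)$.
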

\begin{proof}
We begin by considering a Dunkl process whose distribution $u(t,\bx)$ obeys both the initial condition 
\begin{equation}
u(0,\bx)=\delta^N(\bx-\bx^\prime)=\prod_{i=1}^N\delta(x_i-x_i^\prime),\quad t>0,\label{initialcondition}
\end{equation}
and the KFE \eref{dunklforward}. We write down the derivatives in time and space in terms of the new variables as follows:
\begin{eqnarray}
\frac{\partial}{\partial t}&=&\frac{1}{2\omega t}\frac{\partial}{\partial \tau}-\frac{1}{2t}\bzeta\cdot\bnabla^{(\zeta)},\nonumber\\
\frac{\partial}{\partial x_i}&=&\frac{1}{\sqrt{2\omega t}}\frac{\partial}{\partial \zeta_i}.\label{dertransformation1}
\end{eqnarray}
Note that these derivative transformations are undefined for $t=0$. Thus, we have chosen $t>0$ to guarantee the consistency of the transformation. We insert these derivatives in \eref{dunklforward} to obtain
\begin{eqnarray}
\frac{\partial}{\partial \tau}u(\tau,\bzeta)&=&\frac{1}{2}\Delta^{(\zeta)}u(\tau,\bzeta)-\sum_{\balpha\in R_+}\frac{k(\balpha)}{\balpha\cdot\bzeta}\balpha\cdot\bnabla^{(\zeta)}u(\tau,\bzeta)\nonumber\\
&&+\sum_{\balpha\in R_+}k(\balpha)\frac{\alpha^2}{2}\frac{u(\tau,\bzeta)+u(\tau,\sigma_{\balpha}\bzeta)}{(\balpha\cdot\bzeta)^2}+\omega\bzeta\cdot\bnabla^{(\zeta)}u(\tau,\bzeta).\label{rdunklsubstituted}
\end{eqnarray}
The differential operators above are transformed by \eref{transformationr} as follows:
\begin{eqnarray}
\e^{W}\frac{\partial}{\partial \tau}\e^{-W}&=&\frac{\partial}{\partial \tau}-\omega N,\nonumber\\
\e^{W}\frac{\partial}{\partial \zeta_i}\e^{-W}&=&\frac{\partial}{\partial \zeta_i}-\omega\zeta_i+\sum_{\balpha\in R_+}\frac{k(\balpha)}{\balpha\cdot\bzeta}\alpha_i,\nonumber\\
\e^{W}\Delta^{(\zeta)}\e^{-W}&=&\Delta^{(\zeta)}+2\Bigg(\sum_{\balpha\in R_+}\frac{k(\balpha)}{\balpha\cdot\bzeta}\balpha-\omega\bzeta\Bigg)\cdot\bnabla^{(\zeta)}+\omega^2\zeta^2-(2\gamma+N)\omega\nonumber\\
&&+\sum_{\balpha\in R_+}\sum_{\bxi\in R_+}\frac{k(\balpha)k(\bxi)}{(\balpha\cdot\bzeta)(\bxi\cdot\bzeta)}\balpha\cdot\bxi-\sum_{\balpha\in R_+}\frac{k(\balpha)}{(\balpha\cdot\bzeta)^2}\alpha^2.\label{differentialtransformationsr}
\end{eqnarray}
We insert the above into \eref{rdunklsubstituted} and obtain
\begin{eqnarray}
\frac{\partial}{\partial \tau}U(\tau,\bzeta)&=&\frac{1}{2}\Delta^{(\zeta)}U(\tau,\bzeta)+\frac{\omega}{2}[2\gamma+N-\omega\zeta^2]U(\tau,\bzeta)\nonumber\\
&&+\sum_{\balpha\in R_+}\frac{\alpha^2}{2}\frac{k(\balpha)}{(\balpha\cdot\bzeta)^2}U(\tau,\sigma_{\balpha}\bzeta)\nonumber\\
&&\ -\sum_{\balpha\in R_+}\sum_{\bxi\in R_+}\frac{\balpha\cdot\bxi}{2}\frac{k(\balpha)k(\bxi)}{(\balpha\cdot\bzeta)(\bxi\cdot\bzeta)}U(\tau,\bzeta).\label{withdoublesum}
\end{eqnarray}
The double sum in the bottom term of the equation above can be simplified because all the terms where $\balpha\neq\bxi$ cancel among themselves. We prove this in the Appendix. We finally obtain
\begin{eqnarray}
\frac{\partial}{\partial \tau}U(\tau,\bzeta)&=&\frac{1}{2}\Delta^{(\zeta)}U(\tau,\bzeta)+\frac{\omega}{2}[2\gamma+N-\omega\zeta^2]U(\tau,\bzeta)\nonumber\\
&&-\sum_{\balpha\in R_+}\frac{\alpha^2}{2}k(\balpha)\frac{k(\balpha) U(\tau,\bzeta)-U(\tau,\sigma_{\balpha}\bzeta)}{(\balpha\cdot\bzeta)^2},\label{almostdone}
\end{eqnarray}
or, denoting the ground-state energy by $E_{\text{0}}^{R}=\omega(\gamma+N/2)$ and using $\HH{CM}^{R}$ with $\bzeta$ instead of $\bx$,
\begin{equation}
-\frac{\partial}{\partial \tau}U(\tau,\bzeta)=[\HH{CM}^{R}-E_{\text{0}}^{R}]U(\tau,\bzeta),\label{done}
\end{equation}
as desired.
\end{proof}

\textbf{Remark:} this proof involves only straightforward calculations, with the notable exception of the step required to obtain \eref{almostdone}. This is perhaps the most important part of the proof, and it is not trivial. In the following section we focus on the root system $A_{N-1}$ and we provide an explicit example of the calculations above. In that case, the simplification needed to obtain \eref{almostdone} is straightforward. However, the general case is more involved, and it is detailed in the Appendix.

This theorem implies that CM systems and Dunkl processes share a common mathematical structure and therefore behave similarly. In particular, their freezing behaviour should be similar regardless of the root system under consideration. However, one must be careful when taking the freezing limit because there is more than one parameter involved, leading to multiple ways to take the limit (the multiplicity function $k(\balpha)$ represents more than one parameter in general, and there is also the harmonic frequency $\omega$.) This result also implies that the wealth of information available on the spectrum and eigenfunctions of the CM systems can be used to study Dunkl processes.

Note that Theorem~\ref{correspondencer} only requires that $\omega>0$. If $\omega=0$, there is no need to use the diffusion scaling \eref{substitutionr}, and one may simply apply a similarity transformation on the Dunkl process to obtain the unconfined CM system on the same root system. 


\section{Dunkl Process and Calogero-Moser System of Type $A_{N-1}$}\label{resulta}

Having proved Theorem~\ref{correspondencer}, we may consider the particular case of the root system $A_{N-1}$ as a corollary. In this case, we map the Dunkl process of type $A_{N-1}$ to the CM system with particle exchange interaction described by $\mathcal{H}_{\text{CM}}^{A_{N-1}}$ in \eref{xcm}, so we apply the diffusion-scaling transformation using \eref{substitutionr} with $\omega=k$ and  \eref{transformationr} with
\begin{equation}
W(\tau,\bzeta)=\frac{k}{2}\zeta^2-k\ln\prod_{1\leq i<j\leq N}|\zeta_i-\zeta_j|+kN\tau\label{Wa}
\end{equation}
plus a constant term that vanishes in the transformation. 
\begin{corollary}[Particular case of Theorem~\ref{correspondencer}]\label{correspondencea}
The diffusion-scaling transformation given by \eref{substitutionr}, \eref{transformationr} and \eref{Wa} with $\omega=k$ transforms the Dunkl process of type $A_{N-1}$ into the Calogero-Moser system with particle exchange under harmonic confinement evolving in imaginary time.
\end{corollary}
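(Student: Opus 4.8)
The plan is to treat Corollary~\ref{correspondencea} purely as a specialization of Theorem~\ref{correspondencer}, so the bulk of the work is already done; what remains is (i) to check that the root system $A_{N-1}$ satisfies all hypotheses of the theorem and that substituting $R=A_{N-1}$, $\omega=k$ into the general formulas reproduces exactly the stated objects, and (ii) to verify directly that the key simplification step -- the one that turns \eref{withdoublesum} into \eref{almostdone} -- is elementary in this case, as promised in the Remark. First I would recall from Section~\ref{Preliminaries} that $A_{N-1}$ is reduced, has a single multiplicity $k$, positive subsystem $A_{N-1,+}=\{\balpha_{ij}=(\be_i-\be_j)/\sqrt2:1\le j<i\le N\}$ with $\alpha_{ij}^2=1$, and $\gamma=\tfrac{N}{2}(N-1)k$; hence the hypotheses of Theorem~\ref{correspondencer} hold with $\omega=k>0$.

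Next I would specialize the ingredients of the diffusion-scaling transformation. The substitution \eref{substitutionr} with $\omega=k$ reads $(t,\bx)\to(\tau,\bzeta)=\big(\tfrac{\ln t}{2k},\tfrac{\bx}{\sqrt{2kt}}\big)$. The general weight \eref{weightk} for $A_{N-1}$ is $w_k(\bzeta)=\prod_{1\le i<j\le N}|\zeta_i-\zeta_j|^{2k}$, since each unordered pair $\{i,j\}$ contributes the two roots $\pm\balpha_{ij}$ and $|\balpha_{ij}\cdot\bzeta|=|\zeta_i-\zeta_j|/\sqrt2$, with the $\sqrt2$ factors absorbed into a constant that drops out of the similarity transformation. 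Substituting this into the general $W$ of \eref{Wr}, $W(\tau,\bzeta)=\tfrac12\omega\sum_i\zeta_i^2-\tfrac12\ln w_k(\bzeta)+\omega N\tau$, and using $-\tfrac12\ln w_k(\bzeta)=-k\ln\prod_{i<j}|\zeta_i-\zeta_j|$, gives precisely \eref{Wa} up to that vanishing constant. Finally, the general CM Hamiltonian \eref{generalcm} with $R=A_{N-1}$, $\alpha_{ij}^2=1$, $\balpha_{ij}\cdot\bx=(x_i-x_j)/\sqrt2$ and $\sigma_{\balpha_{ij}}=\sigma_{ij}$ becomes $-\tfrac12\Delta^{(x)}+\sum_{i<j}\tfrac{k(k-\sigma_{ij})}{(x_i-x_j)^2}+\tfrac{k^2}{2}\sum_i x_i^2$, which is exactly $\HH{CM}^{A_{N-1}}$ of \eref{xcm}; the general ground-state energy $E_0^R=\omega(\gamma+N/2)$ becomes $E_0^{A_{N-1}}=k(\tfrac{N}{2}(N-1)k+\tfrac N2)=[kN+k^2N(N-1)]/2$, matching $E_{\text{CM}}^{A_{N-1}}$ in \eref{transformedhamiltonian}.

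Having matched all objects, Theorem~\ref{correspondencer} applied verbatim gives $-\partial_\tau U(\tau,\bzeta)=[\HH{CM}^{A_{N-1}}-E_0^{A_{N-1}}]U(\tau,\bzeta)$, which is the assertion. The one place meriting an explicit remark -- the step producing \eref{almostdone} -- I would address by noting that for $A_{N-1}$ the double sum over $\balpha,\bxi\in A_{N-1,+}$ in \eref{withdoublesum} has cross terms $\balpha_{ij}\cdot\balpha_{lm}/[(\zeta_i-\zeta_j)(\zeta_l-\zeta_m)]$ that, for distinct index pairs, can be grouped by the classical Calogero three-body identity $\sum_{i,j,l\text{ distinct}}\tfrac{1}{(\zeta_i-\zeta_j)(\zeta_i-\zeta_l)}=0$, and the genuinely distinct-pair contributions (those with $\{i,j\}\cap\{l,m\}=\emptyset$) already cancel pairwise because $\balpha_{ij}\cdot\balpha_{lm}=0$ then. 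Thus only the diagonal $\balpha=\bxi$ survives, turning the last line of \eref{withdoublesum} into $-\sum_{i<j}\tfrac{k^2}{2}\tfrac{1}{(\zeta_i-\zeta_j)^2}U(\tau,\bzeta)$ and reproducing \eref{almostdone}. The main obstacle is not in the logic -- which is a pure specialization -- but purely bookkeeping: being careful with the $1/\sqrt2$ normalization of the $A_{N-1}$ roots so that $w_k$, the repulsion coefficients in $\HH{CM}^{A_{N-1}}$, and the constant absorbed in $W$ all land with exactly the powers written in \eref{xcm} and \eref{Wa}.
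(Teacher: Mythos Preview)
Your proof is correct and indeed treats the corollary as a genuine specialization of Theorem~\ref{correspondencer}: you check that $A_{N-1}$ with $\omega=k$ satisfies the hypotheses, and that each general object (the substitution, $W$, the Hamiltonian, $E_0^R$) specializes to the stated $A_{N-1}$ counterpart, then invoke \eref{done}. The paper takes a different route: rather than appealing to the general theorem, it re-runs the entire computation of Theorem~\ref{correspondencer} in $A_{N-1}$ coordinates, writing the KFE \eref{typeadunklkfe}, the transformed derivatives, the intermediate equation \eref{almostcma} with its explicit triple sum $\sum_i\sum_{j\neq i}\sum_{l\neq i}\frac{1}{(\zeta_i-\zeta_j)(\zeta_i-\zeta_l)}$, and then reducing that sum to the diagonal via the three-body identity. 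Your approach is logically cleaner and leverages the general result as intended; the paper's approach is pedagogically motivated, furnishing the ``explicit example'' promised in the Remark after Theorem~\ref{correspondencer} and exhibiting the double-sum simplification of the Appendix in a case where it is elementary. One small point: in your treatment of the cross terms of the root double sum \eref{withdoublesum}, the reduction to the three-body identity is correct but requires noting that pairs $\balpha_{ij},\balpha_{lm}$ sharing exactly one index contribute $\balpha_{ij}\cdot\balpha_{lm}=\pm\tfrac12$ and $(\balpha_{ij}\cdot\bzeta)(\balpha_{lm}\cdot\bzeta)=\pm\tfrac12(\zeta_i-\zeta_j)(\zeta_l-\zeta_m)$, so the $\sqrt2$ normalizations and signs cancel to give exactly the summands appearing in the identity; this is the bookkeeping you flag at the end, and it goes through as you say.
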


\begin{proof}
It suffices to follow the same procedure of Theorem~\ref{correspondencer} applied to $A_{N-1}$ to obtain the result. As before, we consider the distribution $u(t,\bx)$ with initial condition \eref{initialcondition}. The KFE \eref{dunklforward} becomes
\begin{eqnarray}
\frac{\partial}{\partial t}u(t,\bx)&=&\frac{1}{2}\Delta^{(x)}u(t,\bx)-\sum_{i=1}^N\sum_{\substack{j=1:\cr j\neq i}}^N\frac{k}{x_i-x_j}\frac{\partial}{\partial x_i}u(t,\bx)\nonumber\\
&&+\frac{k}{2}\sum_{i=1}^N\sum_{\substack{j=1:\cr j\neq i}}^N\frac{u(t,\bx)+u(t,\sigma_{ij}\bx)}{(x_j-x_i)^2}.\label{typeadunklkfe}
\end{eqnarray}
The expressions \eref{dertransformation1}, \eref{rdunklsubstituted}, \eref{differentialtransformationsr} and \eref{withdoublesum} become
\begin{eqnarray}
\frac{\partial}{\partial t}&=&\frac{1}{2kt}\frac{\partial}{\partial \tau}-\frac{1}{2t}\bzeta\cdot\bnabla^{(\zeta)},\nonumber\\
\frac{\partial}{\partial x_i}&=&\frac{1}{\sqrt{2kt}}\frac{\partial}{\partial \zeta_i},
\end{eqnarray}
\begin{eqnarray}
\frac{\partial}{\partial \tau}u(\tau,\bzeta)&=&\frac{1}{2}\Delta^{(\zeta)}u(\tau,\bzeta)-\sum_{i=1}^N\sum_{\substack{j=1:\cr j\neq i}}^N\frac{k}{\zeta_i-\zeta_j}\frac{\partial}{\partial \zeta_i}u(\tau,\bzeta)\nonumber\\
&&+\frac{k}{2}\sum_{i=1}^N\sum_{\substack{j=1:\cr j\neq i}}^N\frac{u(\tau,\bzeta)+u(\tau,\sigma_{ij}\bzeta)}{(\zeta_j-\zeta_i)^2}+k\bzeta\cdot\bnabla^{(\zeta)}u(\tau,\bzeta),\label{typeadunklsubstituted}
\end{eqnarray}
\begin{eqnarray}
\e^{W}\frac{\partial}{\partial \tau}\e^{-W}&=&\frac{\partial}{\partial \tau}-kN,\nonumber\\
\e^{W}\frac{\partial}{\partial \zeta_i}\e^{-W}&=&\frac{\partial}{\partial \zeta_i}-k\Bigg[\zeta_i-\sum_{\substack{j=1:\cr j\neq i}}^N\frac{1}{\zeta_i-\zeta_j}\Bigg],\nonumber\\
\e^{W}\Delta^{(\zeta)}\e^{-W}&=&\Delta^{(\zeta)}-2k\sum_{i=1}^N\Bigg[\zeta_i-\sum_{\substack{j=1:\cr j\neq i}}^N\frac{1}{\zeta_i-\zeta_j}\Bigg]\frac{\partial}{\partial \zeta_i}\nonumber\\
&&+k\Bigg[k\zeta^2-kN(N-1)-N+\sum_{i=1}^N\sum_{\substack{j=1:\cr j\neq i}}^N\frac{k-1}{(\zeta_i-\zeta_j)^2}\Bigg]\label{differentialtransformationsa}
\end{eqnarray}
and
\begin{eqnarray}
\frac{\partial}{\partial \tau}U(\tau,\bzeta)&=&\frac{1}{2}\Delta^{(\zeta)}U(\tau,\bzeta)+\frac{k}{2}[kN(N-1)+N-k\zeta^2]U(\tau,\bzeta)\nonumber\\
&&+\frac{k}{2}\sum_{i=1}^N\sum_{\substack{j=1:\cr j\neq i}}^N\frac{kU(\tau,\bzeta)+U(\tau,\sigma_{ij}\bzeta)}{(\zeta_i-\zeta_j)^2}\nonumber\\
&&\ -k^2\sum_{i=1}^N\sum_{\substack{j=1:\cr j\neq i}}^N\sum_{\substack{l=1:\cr l\neq i}}^N\frac{U(\tau,\bzeta)}{(\zeta_i-\zeta_j)(\zeta_i-\zeta_l)},\label{almostcma}
\end{eqnarray}
respectively. The triple sum in the bottom term is simplified by considering the case $i\neq j\neq l$, where we have the following three term sum:
\begin{eqnarray}
&&\frac{1}{(\zeta_i-\zeta_j)(\zeta_i-\zeta_l)}+\frac{1}{(\zeta_j-\zeta_l)(\zeta_j-\zeta_i)}+\frac{1}{(\zeta_l-\zeta_i)(\zeta_l-\zeta_j)}\nonumber\\
&&=\frac{\zeta_j-\zeta_l-\zeta_i+\zeta_l+\zeta_i-\zeta_j}{(\zeta_i-\zeta_j)(\zeta_i-\zeta_l)(\zeta_j-\zeta_l)}=0.
\end{eqnarray}
Therefore, the only remaining terms are those in which $i\neq j=l$; this is a particular case of the simplification of the same term in \eref{withdoublesum}, as proved in the Appendix. Then, \eref{almostdone} and \eref{done} become
\begin{eqnarray}
\frac{\partial}{\partial \tau}U(\tau,\bzeta)&=&\frac{1}{2}\Delta^{(\zeta)}U(\tau,\bzeta)-\frac{k}{2}\sum_{i=1}^N\sum_{\substack{j=1:\cr j\neq i}}^N\frac{kU(\tau,\bzeta)-U(\tau,\sigma_{ij}\bzeta)}{(\zeta_i-\zeta_j)^2}\nonumber\\
&&-\frac{k^2}{2}\zeta^2U(\tau,\bzeta)+\frac{kN}{2}[k(N-1)+1]U(\tau,\bzeta)
\end{eqnarray}
and
\begin{equation}
-\frac{\partial}{\partial \tau}U(\tau,\bzeta)=[\HH{CM}^{A_{N-1}}-E_{\text{0}}^{A_{N-1}}]U(\tau,\bzeta),
\end{equation}
respectively, proving the statement.
\end{proof}
This corollary explains the similarity of the freezing behaviour of the radial Dunkl process of type $A_{N-1}$ and the CM system on $A_{N-1}$. On one hand, the transition probability density of the radial Dunkl processes freezes to \cite{andrauskatorimiyashita12}
\begin{equation}\label{freezing}
\lim_{k\to\infty}p^{\textup{s}}_k(t,\sqrt k \bv|\bx^\prime)k^{N/2}=p^{\textup{s}}_\infty(t,\bv)=\sum_{\rho\in S_N} \delta^N[\bv-\sqrt{2t}\rho\bz_N],
\end{equation}
where $\bz_N$ is the vector whose components are the (ordered) roots of the $N$-th Hermite polynomial, defined by \cite{arfken}
\begin{equation}
H_N(x)=(-1)^N e^{x^2}\frac{\ud^N}{\ud x^N}(e^{-x^2}).\label{hermitep}
\end{equation}
On the other hand, under the same limit, the CM system with particle exchange interaction freezes to the Polychronakos-Frahm (PF) spin chain \cite{polychronakos93}. That is, the positions of the particles freeze at the roots of the $N$-th Hermite polynomial and the only remaining dynamical part of $\HH{CM}^{A_{N-1}}$ is the PF spin chain Hamiltonian,
\begin{equation}
\HH{PF}=\sum_{1\leq i<j\leq N}\frac{\sigma_{ij}}{(z_{i,N}-z_{j,N})^2}.\label{pfhamiltonian}
\end{equation}
If we recall the diffusion-scaling substitution \eref{substitutionr}, we realize that
\begin{equation}
\bzeta=\frac{\bx}{\sqrt{2kt}}=\frac{\bv}{\sqrt{2t}},
\end{equation}
because $\bx=\sqrt{k}\bv$ in \eref{freezing}. This equation is consistent with the freezing regimes of both systems ($\bzeta\to\bz_N$ and $\bv\to\sqrt{2t}\bz_N$ when the freezing limit is taken), so the factor of $\sqrt{2t}$ is accounted for by the diffusion-scaling transformation. In other words, the similar behaviour of both systems in the freezing limit goes beyond simply multiplying or dividing a factor of $\sqrt{2t}$ to the particles' frozen positions. It is a consequence of the fact that the CM system with particle exchange interaction is the diffusion-scaling transform of the Dunkl process of type $A_{N-1}$. Therefore, we can say that because one system is the diffusion-scaling transform of the other, they behave similarly at any temperature and in particular, they freeze in the same way.

\section{Concluding Remarks}\label{conclusions}

We have established a correspondence between Dunkl processes and CM systems with harmonic confinement on a line with the use of the diffusion-scaling transformation. This strategy works because the variable substitution confines the Dunkl process equally in all directions by producing a restoring drift term (the second term on the rhs of the first line of \eref{dertransformation1}). Once this drift term is present, it only remains to perform a similar operation to the one given in \eref{transformedhamiltonian} to obtain the desired result. In our case, that operation is a similarity transformation based on a function proportional to the ground-state eigenfunction of the CM system, $\exp[-W(\tau,\bxi)]$ \cite{khastgir00}. 

By establishing this correspondence, we conclude that these two types of systems share similar characteristics. In particular, their freezing limits must be similar in the sense that, if the particles in a CM system freeze at the positions $\bz$, then the scaled final position vector of the corresponding radial Dunkl process is given by $\bv=\sqrt{2t}\bz$, where we assume that $\omega=k(\balpha)=k$ and $\bv=\bx/\sqrt{k}$ before taking the limit $k\to\infty$. The particular case of the CM system of type $B_N$ is studied in \cite{yamamototsuchiya96} where it is shown that when $\omega=k_1=k_2=k$, the system freezes at the roots of the Laguerre polynomials. Therefore, we expect that the radial Dunkl process of type $B_N$ should freeze to a scaled final position proportional to the Laguerre roots. This is a calculation we plan to tackle in the near future.

Concerning the circular case, T. Kimura has pointed out that the the freezing trick has also been applied to the CMS (or circular CM) systems. In that case, the freezing trick produces spin chains of Haldane-Shastry type \cite{haldane88,shastry88}. This prompts the possibility of constructing a mapping from Dunkl processes defined on the unit circle to CMS systems similar to the diffusion-scaling transformation discussed here. At first glance, we expect the mapping to be fairly straightforward, because both systems are defined on a space of finite size. Hence, no diffusion scaling would be needed and only a similarity transformation should be required. However, circular Dunkl processes have not yet been defined to the best of our knowledge. We suspect that the trigonometric Dunkl operators considered by Cherednik \cite{cherednik91}, Heckman \cite{heckman91} and Opdam \cite{opdam95} should be at the core of such a definition. This is a problem we leave open for future work.

As noted before, there exist different ways of taking the freezing limit. So far, we have considered the cases where all parameters are locked to the same value and then brought to infinity. However, little is known about freezing limits where not all the multiplicities tend to infinity, and in consequence, their physical implications are unknown as well. They have been studied in the context of the generalized Bessel function on certain root systems \cite{roslervoit08, roslerkoornwindervoit12}. The generalized Bessel function is relevant in the sense that it is a part of the TPD of radial Dunkl processes \cite{demni08A,demni08C,rosler08}. It has been shown that this function undergoes a transition from the root system $B_N$ with multiplicities $k_1,k_2$ to the root system $A_{N-1}$ with multiplicity $k=k_2$ when $k_1$ tends to infinity and its arguments are properly scaled. It is of interest to test whether this kind of transition occurs in the context of Dunkl processes and CM systems, and we plan to investigate this matter starting with the root system $B_N$.

Finally, and following the ideas in \cite{katoritanemura07}, the fact that the CM systems with harmonic confinement are the diffusion-scaling transform of the Dunkl processes on the same root system allows us to use the quantum mechanics of the CM systems to study Dunkl processes. In particular, the harmonic confinement produces a discrete basis of eigenfunctions which can be used to represent the TPD of Dunkl processes. Hence, we believe that the existing results on the CM systems may provide further insight into the theory of Dunkl processes and their applications in physics.

\ack{
SA would like to thank T. Kimura for helpful discussions on the CM systems. SA would also like to thank P. Graczyk and the organizing committee of the conference ``Harmonic Analysis and Probability'' in Angers, France (September 2-8, 2012), where part of this work was carried out. For their comments and suggestions, SA would like to thank N. Demni, L. Gallardo, A. Hardy, M. R{\"o}sler and M. Voit.
SA is supported by the Monbukagakusho: MEXT scholarship for research students.
MK is supported by
the Grant-in-Aid for Scientific Research (C)
(Grant No.21540397) from the Japan Society for
the Promotion of Science.
}

%

\appendix

\section*{Appendix}

\setcounter{section}{1}

We prove two lemmas necessary to complete the proof of Theorem~\ref{correspondencer}. The first lemma corresponds to Theorem~4.2.4 of \cite{dunklxu}, and it involves a polynomial called the discriminant of a (reduced) root system $R$, defined by
\begin{equation}
a_{R}(\bx)=\prod_{\bv\in R_+}\bv\cdot\bx.
\end{equation}
\begin{lemma}\label{lemmadiscriminant}
The discriminant of $R$ is the minimum-order polynomial that obeys the alternating property
\begin{equation}
a_{R}(\sigma_{\bu}\bx)=-a_{R}(\bx)
\end{equation}
for all $\bu\in R$, up to a constant factor.
\end{lemma}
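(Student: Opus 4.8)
The plan is to prove two things: that the discriminant $a_{R}$ itself satisfies the alternating property, and that any polynomial $P$ with $P(\sigma_{\bu}\bx)=-P(\bx)$ for every $\bu\in R$ is divisible by $a_{R}$. The second statement instantly gives the minimality, $\deg P\ge|R_+|=\deg a_{R}$, together with uniqueness up to a constant: if $\deg P=\deg a_{R}$, then $P/a_{R}$ has degree zero, so $P=c\,a_{R}$.

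For the first part I would use that each reflection $\sigma_{\bu}$ is orthogonal, so that $\bv\cdot(\sigma_{\bu}\bx)=(\sigma_{\bu}\bv)\cdot\bx$ and hence $a_{R}(\sigma_{\bu}\bx)=\prod_{\bv\in R_+}(\sigma_{\bu}\bv)\cdot\bx$. Because $\sigma_{\bu}$ permutes $R$ injectively and $\sigma_{\bu}\bv=-\sigma_{\bu}\bv'$ would force $\bv=-\bv'$, which is impossible for $\bv,\bv'\in R_+$, the set $\{\sigma_{\bu}\bv:\bv\in R_+\}$ coincides with $R_+$ after flipping the signs of some of its elements; therefore $a_{R}(\sigma_{\bu}\bx)=\pm a_{R}(\bx)$. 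Fixing this sign to be $-1$ is the subtle step. For a \emph{simple} root $\bu$ it is immediate, since $\sigma_{\bu}$ permutes $R_+\setminus\{\bu\}$ and maps $\bu$ to $-\bu$, so exactly one sign flips. For a general $\bu\in R$ I would use the standard fact that $\bu=g\balpha$ for some simple root $\balpha$ and some $g$ in the finite group $W$ generated by the reflections $\{\sigma_{\bv}:\bv\in R\}$, so that $\sigma_{\bu}=g\,\sigma_{\balpha}\,g^{-1}$; the assignment $g\mapsto a_{R}(g\bx)/a_{R}(\bx)$ is then a well-defined homomorphism $W\to\{\pm1\}$ equal to $-1$ on every simple reflection, and since the target group is abelian it is $-1$ on the conjugate $\sigma_{\bu}$ as well. (Equivalently, one shows directly that $\sigma_{\bu}$ sends an odd number of positive roots to negative roots.)

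For the second part, assume $P(\sigma_{\bu}\bx)=-P(\bx)$ for all $\bu\in R$ and fix $\bv\in R_+$. Every point $\bx$ of the hyperplane $\{\bx:\bv\cdot\bx=0\}$ is fixed by $\sigma_{\bv}$, so $P(\bx)=P(\sigma_{\bv}\bx)=-P(\bx)$ and $P$ vanishes identically on that hyperplane; since $\bv\cdot\bx$ is a linear form, choosing coordinates in which it is a coordinate function shows that $\bv\cdot\bx$ divides $P$. Because $R$ is reduced, distinct elements of $R_+$ are non-proportional, so the forms $\{\bv\cdot\bx:\bv\in R_+\}$ are pairwise coprime irreducibles, and their product $a_{R}=\prod_{\bv\in R_+}\bv\cdot\bx$ therefore divides $P$. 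Comparing degrees finishes the argument.

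The hard part is the sign determination in the first step for non-simple roots, i.e.\ establishing the transformation law of $a_{R}$ under the whole reflection group from its behaviour under the simple reflections; the rest is elementary linear algebra and unique factorization. This lemma is Theorem~4.2.4 of \cite{dunklxu}, and the approach outlined here follows that reference.
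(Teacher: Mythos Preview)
Your proof is correct, but the route you take for the alternating property differs from the paper's. You establish $a_R(\sigma_{\bu}\bx)=-a_R(\bx)$ first for simple $\bu$ and then transport the sign to arbitrary roots via the homomorphism $W\to\{\pm1\}$ and conjugacy; this imports the nontrivial fact that every root is $W$-conjugate to a simple one. The paper instead argues directly for \emph{every} $\bu\in R$: it partitions $R_+$ into $\{\bu\}$, the roots orthogonal to $\bu$, and the remaining roots, and observes that the latter come in pairs $\{\bv,\bv'\}$ with $\sigma_{\bu}\bv=\pm\bv'$, so that $(\sigma_{\bu}\bv\cdot\bx)(\sigma_{\bu}\bv'\cdot\bx)=(\bv\cdot\bx)(\bv'\cdot\bx)$ regardless of the sign; the single factor $\bu\cdot\bx$ then supplies the overall $-1$. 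This is more elementary and self-contained than your conjugacy argument.

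For the minimality, on the other hand, your argument is sharper than the paper's. The paper only remarks that deleting any one linear factor $\bu'\cdot\bx$ from $a_R$ destroys the alternating property along $\bu'$; this shows $a_R$ cannot be shortened but does not literally prove that an arbitrary alternating polynomial has degree at least $|R_+|$. Your divisibility argument (vanishing on each reflecting hyperplane, pairwise coprimality of the linear forms since $R$ is reduced, hence $a_R\mid P$) gives exactly that, and is in fact what the application in the subsequent lemma requires.
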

\begin{proof}
By definition, a root system is invariant under reflection along any of its elements. Hence, we can divide $R_+$ into the following three sets:
\begin{eqnarray}
E_1^u&=&\{\bv\in R_+:\sigma_{\bu}\bv=\bv\},\nonumber\\
E_2^u&=&\{\bv\in R_+: {}^\exists \bv^\prime\neq\bv \text{ s.t. } \sigma_{\bu}\bv=\pm \bv^\prime\}
\end{eqnarray}
and $\{\bu\}$. That is, when a root of $R_+$ is reflected along $\bu$, it is either unchanged (orthogonal), it is reflected onto another root of $R$, which is in $R_+$ or in $R_-=-R_+$, or it is turned into its negative, \emph{i.e.}, it is $\bu$ itself. By definition, the roots in $E_2^u$ obey the property that if $\bv\in E_2^u$, there is a root $\bv^\prime$ such that $\sigma_{\bu}\bv=\pm \bv^\prime$, so applying the reflection $\sigma_{\bu}$ on this last equation yields
\begin{equation}
\sigma_{\bu}\bv^\prime=\pm \bv.
\end{equation}
That is, $\bv^\prime\in E_2^u$, so there is an even number of roots in $E_2^u$, and they form pairs such that
\begin{equation}
(\sigma_{\bu}\bv\cdot\bx)(\sigma_{\bu}\bv^\prime\cdot\bx)=(\bv^\prime\cdot\bx)(\bv\cdot\bx).
\end{equation}
Now, we can write
\begin{eqnarray}
a_{R}(\sigma_{\bu}\bx)&=&\prod_{\bv\in R_+}\sigma_{\bu}\bv\cdot\bx=\sigma_{\bu}\bu\cdot\bx\prod_{\bv\in E_1^u}\sigma_{\bu}\bv\cdot\bx\prod_{\bv\in E_2^u}\sigma_{\bu}\bv\cdot\bx\nonumber\\
&=&-\bu\cdot\bx\prod_{\bv\in E_1^u}\bv\cdot\bx\prod_{\bv\in E_2^u}\bv\cdot\bx=-\prod_{\bv\in R_+}\bv\cdot\bx=-a_{R}(\bx)
\end{eqnarray}
The reason for the first equality is that $\sigma_{\bu}\bx\cdot\by=\bx\cdot\sigma_{\bu}\by$ for any vectors $\bx$ and $\by$, with $\bu\neq\bzero$. This equation holds for all $\bu\in R$. Moreover, if we were to take any one of the factors that make up $a_r(\bx)$ away, say the factor $\bu^\prime\cdot\bx$, it would no longer be an alternating polynomial along $\bu^\prime$. This completes the proof.
\end{proof}
The following lemma is a particular case of Lemma~4.4.6 of \cite{dunklxu}. It requires Lemma~\ref{lemmadiscriminant}, and it completes Theorem~\ref{correspondencer}.

\begin{lemma}\label{lemmadoublesum}
The double sum in \eref{withdoublesum} is given by
\begin{equation}
\sum_{\balpha\in R_+}\sum_{\bxi\in R_+}\balpha\cdot\bxi\frac{k(\balpha)k(\bxi)}{(\balpha\cdot\bx)(\bxi\cdot\bx)}=\sum_{\balpha\in R_+}\alpha^2\frac{k(\balpha)^2}{(\balpha\cdot\bx)^2}.\label{eqlemmadoublesum}
\end{equation}
\end{lemma}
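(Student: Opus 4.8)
The plan is to prove the identity \eref{eqlemmadoublesum} by showing that the off-diagonal contribution (the terms with $\balpha\neq\bxi$) vanishes. Writing the left-hand side as a sum over ordered pairs $(\balpha,\bxi)\in R_+\times R_+$, we split it into the diagonal part $\balpha=\bxi$, which already reproduces the right-hand side, and the remainder $S=\sum_{\balpha\neq\bxi}\balpha\cdot\bxi\,k(\balpha)k(\bxi)/[(\balpha\cdot\bx)(\bxi\cdot\bx)]$; it then suffices to show $S=0$.

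The key observation is that $S$ is a rational function of $\bx$ whose only possible singularities lie on the reflecting hyperplanes $\{\balpha\cdot\bx=0\}$, and that each such singularity is at worst a simple pole. Concretely, fix a root $\bu\in R_+$ and examine the behaviour of $S$ near the hyperplane $H_{\bu}=\{\bu\cdot\bx=0\}$. Terms in which neither $\balpha$ nor $\bxi$ equals $\bu$ stay finite there, so the singular part of $S$ near $H_{\bu}$ comes only from the pairs $(\bu,\bxi)$ and $(\balpha,\bu)$ with $\bxi,\balpha\neq\bu$. Summing these, the residue on $H_{\bu}$ is proportional to $\sum_{\bxi\in R_+\setminus\{\bu\}}\bu\cdot\bxi\,k(\bu)k(\bxi)/(\bxi\cdot\bx)$, evaluated on $H_{\bu}$. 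Using the pairing of $R_+\setminus\{\bu\}$ into the reflection-fixed set $E_1^{\bu}$ (where $\bu\cdot\bxi=0$, so these contribute nothing) and the swapped pairs $\{\bxi,\bxi'\}\subset E_2^{\bu}$ with $\sigma_{\bu}\bxi=\pm\bxi'$ — exactly the decomposition used in the proof of Lemma~\ref{lemmadiscriminant} — one checks that $\bu\cdot\bxi=-\bu\cdot\bxi'$ while $k(\bxi)=k(\bxi')$ (multiplicities are constant on reflection orbits) and that $\bxi\cdot\bx=\bxi'\cdot\bx$ on $H_{\bu}$ (since $\sigma_{\bu}$ fixes $\bx\in H_{\bu}$ and $\sigma_{\bu}\bxi\cdot\bx=\bxi\cdot\sigma_{\bu}\bx$). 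Hence the contributions of each such pair cancel, the residue on every $H_{\bu}$ vanishes, and $S$ has no poles at all.

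A rational function on $\RR^N$ with no poles is a polynomial; but $S$ is a sum of terms each homogeneous of degree $0$ in $\bx$, so $S$ is itself homogeneous of degree $0$, hence a constant. To pin down that constant, note that $S$ is an odd function under each reflection $\sigma_{\bu}$ composed suitably — more directly, $S(\lambda\bx)=S(\bx)$ but also $S$ changes sign under $\bx\mapsto-\bx$ only if there were odd-degree behaviour; cleanly, one evaluates the constant by letting $|\bx|\to\infty$ along a generic direction, where each term tends to $\balpha\cdot\bxi\,k(\balpha)k(\bxi)/[(\balpha\cdot\bx)(\bxi\cdot\bx)]\to 0$, forcing $S\equiv 0$. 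This establishes \eref{eqlemmadoublesum}.

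The main obstacle is the residue computation on each hyperplane $H_{\bu}$: one must handle carefully the two families of off-diagonal pairs that become singular there, verify that the $E_1^{\bu}$-terms drop out because $\bu$ is orthogonal to those roots, and match the $E_2^{\bu}$-terms in reflection pairs using both the constancy of $k$ on orbits and the identity $\sigma_{\bu}\bxi\cdot\bx=\bxi\cdot\bx$ valid precisely on $H_{\bu}$. Once this cancellation is in hand the rest (no poles $\Rightarrow$ polynomial $\Rightarrow$ constant $\Rightarrow$ zero by degree/asymptotics) is routine. An alternative, perhaps slicker, route would be to recognize the left-hand side of \eref{eqlemmadoublesum} as $|\bnabla^{(x)}\ln a_R(\bx)|^2$ weighted appropriately and differentiate the alternating identity of Lemma~\ref{lemmadiscriminant}, but the residue argument above is the most self-contained.
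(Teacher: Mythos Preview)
Your route is genuinely different from the paper's and essentially sound, with one slip to fix. The paper does not argue hyperplane by hyperplane; instead it groups the off-diagonal pairs $(\balpha,\bxi)$ according to the rotation $w=\sigma_{\balpha}\sigma_{\bxi}$, shows that each such partial sum $f(\bx,w)$ is invariant under reflections in the rank-two subsystem $R\cap\spn(\bu,\bv)$, and then multiplies by the discriminant $a_{R\cap\spn(\bu,\bv)}(\bx)$ to obtain an alternating polynomial of degree strictly smaller than the minimum allowed by Lemma~\ref{lemmadiscriminant}, forcing it to vanish. Your residue argument is arguably more elementary: it reuses the $E_1^{\bu}/E_2^{\bu}$ pairing of Lemma~\ref{lemmadiscriminant} directly, without passing through rank-two subsystems or the rotation grouping, and the pole cancellation on each $H_{\bu}$ is a clean one-line check. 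What the paper's approach buys is that it localises the vanishing to each rotation class separately (a slightly finer statement), and it stays purely algebraic without invoking ``no poles $\Rightarrow$ polynomial''.

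The slip: each summand $\balpha\cdot\bxi\,k(\balpha)k(\bxi)/[(\balpha\cdot\bx)(\bxi\cdot\bx)]$ is homogeneous of degree $-2$ in $\bx$, not degree $0$; so $S(\lambda\bx)=\lambda^{-2}S(\bx)$, and your line ``$S(\lambda\bx)=S(\bx)$'' is incorrect. This actually streamlines your conclusion: once $S$ has no poles it is a polynomial, and a polynomial homogeneous of negative degree is identically zero, so you are done immediately --- the asymptotic evaluation and the ``hence a constant'' reasoning can be dropped. With that correction the argument is complete.
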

\begin{proof}
Let us consider two different roots of $R_+$, say, $\bu$ and $\bv$. We denote by $w_{\bu\bv}$ the rotation obtained by composing the reflections along $\bu$ and $\bv$, that is, $w_{\bu\bv}=\sigma_{\bu}\sigma_{\bv}$. Let us also define the function
\begin{equation}
f(\bx, w_{\bu\bv})=\sum_{\substack{\balpha,\bxi\in R_+:\cr \sigma_{\balpha}\sigma_{\bxi}=w_{\bu\bv}}}\frac{(\balpha\cdot\bxi) k(\balpha)k(\bxi)}{(\balpha\cdot\bx)(\bxi\cdot\bx)}.
\end{equation}
We will prove that $f(\bx, w_{\bu\bv})=0$. For this purpose, we define the function
\begin{equation}
g(\bx,w_{\bu\bv})=a_{R\cap \spn(\bu,\bv)}(\bx)f(\bx, w_{\bu\bv}).
\end{equation}
Here, $\spn(\bu,\bv)$ represents the vector space (plane) generated by the pair of vectors $\{\bu,\bv\}$. Let us calculate $f(\sigma_{\bz}\bx, w_{\bu\bv})$, with $\bz\in R\cap \spn(\bu,\bv)$, that is, $\bz$ is a root contained in the plane of the rotation $w_{\bu\bv}$.
\begin{eqnarray}
\fl f(\sigma_{\bz}\bx, w_{\bu\bv})&=&\sum_{\substack{\balpha,\bxi\in R_+:\cr \sigma_{\balpha}\sigma_{\bxi}=w_{\bu\bv}}}\frac{(\balpha\cdot\bxi) k(\balpha)k(\bxi)}{(\balpha\cdot\sigma_{\bz}\bx)(\bxi\cdot\sigma_{\bz}\bx)}=\sum_{\substack{\balpha,\bxi\in R_+:\cr \sigma_{\balpha}\sigma_{\bxi}=w_{\bu\bv}}}\frac{(\sigma_{\bz}\balpha\cdot\sigma_{\bz}\bxi) k(\sigma_{\bz}\balpha)k(\sigma_{\bz}\bxi)}{(\sigma_{\bz}\balpha\cdot\bx)(\sigma_{\bz}\bxi\cdot\bx)}\nonumber\\
&=&\sum_{\substack{\balpha^\prime,\bxi^\prime\in R_+:\cr \sigma_{\sigma_{\bz}\balpha^\prime}\sigma_{\sigma_{\bz}\bxi^\prime}=w_{\bu\bv}}}\frac{(\balpha^\prime\cdot\bxi^\prime) k(\balpha^\prime)k(\bxi^\prime)}{(\balpha^\prime\cdot\bx)(\bxi^\prime\cdot\bx)}
\end{eqnarray}
On the second line, we have used the variable substitution $\sigma_{\bz}\balpha=\balpha^\prime$, and the analogue for $\bxi^\prime$. A direct calculation reveals that $\sigma_{\sigma_{\bz}\balpha^\prime}=\sigma_{\bz}\sigma_{\balpha^\prime}\sigma_{\bz}$, so
\begin{equation}
\sigma_{\sigma_{\bz}\balpha^\prime}\sigma_{\sigma_{\bz}\bxi^\prime}=\sigma_{\bz}\sigma_{\balpha^\prime}\sigma_{\bz}\sigma_{\bz}\sigma_{\bxi^\prime}\sigma_{\bz}=\sigma_{\bz}\sigma_{\balpha^\prime}\sigma_{\bxi^\prime}\sigma_{\bz},
\end{equation}
and the condition $\sigma_{\sigma_{\bz}\balpha^\prime}\sigma_{\sigma_{\bz}\bxi^\prime}=w_{\bu\bv}$ becomes
\begin{equation}
\sigma_{\bxi^\prime}\sigma_{\balpha^\prime}=w_{\bu\bv}.
\end{equation}
Hence,
\begin{equation}
f(\sigma_{\bz}\bx, w_{\bu\bv})=\sum_{\substack{\balpha^\prime,\bxi^\prime\in R_+:\cr \sigma_{\bxi^\prime}\sigma_{\balpha^\prime}=w_{\bu\bv}}}\frac{\balpha^\prime\cdot\bxi^\prime k(\balpha^\prime)k(\bxi^\prime)}{(\balpha^\prime\cdot\bx)(\bxi^\prime\cdot\bx)}=f(\bx, w_{\bu\bv})
\end{equation}
because each term in the sum is unchanged when $\bxi^\prime$ and $\balpha^\prime$ are exchanged. Now, we know that
\begin{eqnarray}
 g(\sigma_{\bz}\bx,w_{\bu\bv})&=&a_{R\cap \spn(\bu,\bv)}(\sigma_{\bz}\bx)f(\sigma_{\bz}\bx, w_{\bu\bv})\nonumber\\
&=&-a_{R\cap \spn(\bu,\bv)}(\bx)f(\bx, w_{\bu\bv})=-g(\bx,w_{\bu\bv})
\end{eqnarray}
because of Lemma~\ref{lemmadiscriminant}, so $g(\bx,w_{\bu\bv})$ also has the alternating property. However, $a_{R\cap \spn(\bu,\bv)}(\bx)$ is the alternating polynomial on $R\cap \spn(\bu,\bv)$ of minimum degree ($|R_+\cap \spn(\bu,\bv)|$), and $g(\bx,w_{\bu\bv})$ is a polynomial of smaller degree, $|R_+\cap \spn(\bu,\bv)|-2$. This contradicts Lemma~\ref{lemmadiscriminant} unless $g(\bx,w_{\bu\bv})$ is identically equal to zero. Therefore, $f(\bx, w_{\bu\bv})$ must also be equal to zero.

Repeating the argument above for all possible pairs of different $\bu,\bv\in R_+$, one can show that the terms where $\balpha\neq\bxi$ in \eref{eqlemmadoublesum} vanish, and only the terms with $\balpha=\bxi$ remain. The remaining terms do not vanish because, in the argument above, if $\bv=\bu$ then $a_{R\cap \spn(\bu,\bv)}(\bx)$ is a linear function of $\bx$. Hence, $g(\bx,w_{\bu\bv})$ is no longer a polynomial, so we cannot use Lemma~\ref{lemmadiscriminant} to arrive to a contradiction. With this, the proof is complete.
\end{proof}

\bibliography{biblio}

\end{document}